\newtheorem{alg}{Algorithm}[section]
\newtheorem{fig}{Figure}[section]
\newtheorem{tab}{Table}[section]
\newtheorem{exa}{Example}[section]
\newtheorem{prop}{Proposition}[section]
\newtheorem{lem}{Lemma}[section]
\newtheorem{rem}{Remark}[section]
\newtheorem{theorem}{Theorem}[section]
\newtheorem{cor}{Corollary}[section]
\newtheorem{lemma}[theorem]{Lemma}
\newcommand{\bc}{\begin{center}}
\newcommand{\ec}{\end{center}}
\newcommand{\beq}{\begin{equation}}
\newcommand{\eeq}{\end{equation}}
\newcommand{\eit}{\end{itemize}}
\newcommand{\bit}{\begin{itemize}}
\newcommand{\bex }{\begin{exa}}
\newcommand{\eex}{\end{exa}}
\newcommand{\betab}{\begin{tab}}
\newcommand{\epsilonb}{\end{tab}}
\newcommand{\efig}{\end{fig}}
\newcommand{\befig}{\begin{fig}}
\newcommand{\berem }{\begin{rem}}
\newcommand{\erem}{\end{rem}}
\newcommand{\balg}{\begin{alg}}
\newcommand{\ealg}{\end{alg}}
\newcommand{\eprop}{\end{prop}}
\newcommand{\beprop}{\begin{prop}}
\newcommand{\belem}{\begin{lem}}
\newcommand{\elem}{\end{lem}}
\renewcommand{\beth}{\begin{theorem}}
\newcommand{\etheor}{\end{theorem}}
\renewcommand{\epsilon}{\varepsilon}
\def\real{\hbox{\rm\setbox1=\hbox{I}\copy1\kern-.45\wd1 R}}
\def\peal{\hbox{\rm\setbox1=\hbox{I}\copy1\kern-.45\wd1 P}}
\def\eal{\hbox{\rm\setbox1=\hbox{1}\copy1\kern-.45\wd1 I}}
\def\neal{\hbox{\rm\setbox1=\hbox{I}\copy1\kern-.45\wd1 N}}
\newcommand{\ben}{\begin{enumerate}}
\newcommand{\een}{\end{enumerate}}
\newcommand{\Em}{\mathbb E}
\newcommand{\Pm}{\mathbb P}
\newcommand{\be}{\begin{equation}}
\newcommand{\ee}{\end{equation}}
\newcommand\del{\bgroup\markoverwith
    {\color{red}{\rule[0.5ex]{2pt}{0.7pt}}}\ULon}
\newif\ifnotes\notestrue
\newcommand\slavadelete{\bgroup\markoverwith
    {\color{blue}{\rule[0.5ex]{2pt}{0.7pt}}}\ULon}
\begin{document}

\begin{center}
{\large  {\bf Monte Carlo for estimating exponential convolution
}}
\end{center}

\vspace{1cm}

\begin{center}

{Ilya Gertsbakh$^{\mbox
{\footnotesize a}}$, Eyal Neuman$^{\mbox{\footnotesize b}}$, Radislav Vaisman$^{\mbox{\footnotesize b}}$}

{\small
\vspace{0.2cm}

$^{\mbox{\footnotesize a}}$ Department of Mathematics, \\
Ben Gurion University, Beer-Sheva 84105,
Israel\\
\textsf{elyager@bezeqint.net} \\

\vspace{0.1cm}

$^{\mbox{\footnotesize b}}$
Faculty of Industrial Engineering and Management,  \\
Technion, Israel Institute of Technology,
Haifa 32000, Israel \\
\textsf{eyaln@tx.technion.ac.il
}  \\
\textsf{slava@tx.technion.ac.il
}  \\
\vspace{0.2cm}
}
\end{center}

\bc
\today
\ec

\begin{abstract}
In this note we study the numerical stability problem that may 	take place when calculating the cumulative distribution function  of the {\it Hypoexponential} random variable. This computation is extensively used during the execution of Monte Carlo network reliability estimation algorithms. In spite of the fact that analytical formulas  are available, they can be unstable in practice.  This instability occurs frequently when estimating very small failure probabilities $(10^{-30}-10^{-40})$ that can happen for example while estimating the unreliability of telecommunication systems. In order to address this problem, we propose a simple unbiased estimation algorithm that is capable of handling a large number of variables. We show that the proposed estimator has a bounded relative error and that it compares favorably with other existing methods.
\end{abstract}

\par\bigskip\noindent
{\bf Keywords.} Hypoexponential distribution, Monte Carlo, Rare Events, Network Reliability.


\section{Introduction}
Network Reliability problem appears in many real life applications such as transportation, social and computer networks, communication, and many more. One approach to handle this problem is by using a Monte Carlo (MC) technique. Some MC methods require computation of the {\it Cumulative Distribution Function} (CDF) of the {\it Hypoexponential} random variable.

We can state the reliability problem as follows. Suppose we are given an undirected graph $G(V,E,T)$  where $V$ and $E$ are the vertex and edge sets respectively and let $T\subseteq V$ be some terminal set of nodes. Suppose also that edges are subject to failure and for all $e \in E$ there is a corresponding failure probability $q_e$. Under this setting we can ask for the probability that the terminal set $T$ be connected. We call the latter an {\it UP} state.

One of the well-studied approaches to handle this problem is called an {\it Evolution Monte Carlo } (EMC) method \cite{Gertsbakh}.
The main idea is that at time zero no edges are present in the graph. Next, we assign each edge a corresponding exponential random variable that represents the time this edge is ``born". Naturally, there is a time when the network enters the {\it UP}  state. The {\it Evolution Monte Carlo} method studies those times and delivers  the corresponding network reliability (for details see \cite{Gertsbakh}). While executing the EMC algorithm, we need to perform many calculations of the  form  $\Pm(\sum_{i=0}^n{X_i}\leq t)$.
Note that $X_i \sim exp(\lambda_i)$, so this sum is distributed {\it Hypoexponentially} and the corresponding complementary CDF can be computed using a matrix exponential:
\beq\label{conv1}
 \Pm(\sum_{i=0}^n{X_i}\geq t)=e_1e^{Dt}\textbf{1}=e_1\sum_{k=0}^{\infty}{\frac{D^kt^k}{k!} \textbf{1}}
\eeq
where $e_1=(1,0,\cdots,0)$ is a $1 \times n$ vector, \textbf{1} is a $n \times 1$
column vector of ones, and
\[ D =
 \begin{pmatrix}
  -\lambda_1 & \lambda_1 & 0 & \cdots & 0 \\
  0 & -\lambda_2 & \lambda_2 &  \cdots & 0 \\
  \vdots  & \vdots  & \vdots  & \ddots & \vdots  \\
  0  & \cdots  & 0  & -\lambda_{n-1} & \lambda_{n-1}  \\
  0  & \cdots  & 0  & 0 & \lambda_{n}  \\
 \end{pmatrix} \]
is a $n \times n$ matrix \cite{BotevReliability}. For the rest of this section we concentrate on the methods used to perform this computation.\\\\
First, we examine the exact algorithms available.
\bit
\item  If $\lambda_1 > \lambda_2 > \cdots > \lambda_n$ is satisfied, formula (\ref{conv1}) can be written as
\beq\label{exact.eq}
    \Pm(\sum_{i=0}^n{X_i}\leq t)=1-\sum_{i=0}^n{e^{-\lambda_it}\prod_{j \neq i} {\frac{\lambda_j}{\lambda_j-\lambda_i}} }
\eeq
and computed in $O(n^2)$ time following Ross in \cite{ross}. Unfortunately, it was noted that this formula suffers from numerical instability. For example, consider the following $\lambda$ values.
\[
\begin{array}{c}
  \lambda_1=10.00, \lambda_2=9.99,\lambda_3=9.98,\lambda_4=9.97, \lambda_5=9.96, \lambda_6=9.95, \\ \lambda_7=9.94\lambda_8=9.93,\lambda_9=9.92,	 \lambda_{10}=9.91,\lambda_{11}=9.9,	\lambda_{12}=9.89.
\end{array}
\]
Using a MatLab code proposed in \cite{zdravko.handbook} to calculate $\Pm(\sum_{i=0}^{12}{X_i}\leq 1)$ we observe that this probability is equal to $-134,217,727$. The result can be verified using the {\it convolution1} code in Appendix \ref{exact.algorithm}.\\

\item A much better approach was tested by Botev et al. \cite{BotevReliability} and exploited a new matrix exponential algorithm called {\it scaling and squaring} that was introduced by Higham in \cite{Higham05thescaling}.  The {\it convolution2} MatLab implementation is attached in Appendix \ref{exact.algorithm}. This method is very stable but more expensive in the sense of CPU time when compared to {\it convolution1}.
\eit
Next, we introduce randomized methods that output the estimation of the desired value.
\bit
\item The {\it Cross Entropy} {\it (CE)} method is a powerful technique for solving difficult estimation
and optimization problems, based on Kullback-Leibler (or cross-entropy) minimization \cite{Kroese2011}.
This method was pioneered by Rubinstein in 1999 \cite{rub97} and is based on  an adaptive importance sampling procedure for the estimation of rare-event probabilities.

\item The $Splitting$ method is another common technique to deal with counting, combinatorial optimization
and rare-event estimation, but unlike the {\it CE} method that is based on Importance Sampling, the $Splitting$ procedure relies on the Markov Chain Monte Carlo (MCMC) approach.  $Splitting$ dates back to Kahn and Harris \cite{kahn51} and Rosenbluth and Rosenbluth \cite{rosenbluth55}. The main idea  is to partition the
state-space of a system into a series of nested subsets and to consider the rare event as the intersection of a nested sequence of events.

\item The Conditional Monte Carlo Algorithm {\it (G-S)} proposed by Gertsbakh and Shpungin in \cite{Gertsbakh}, Section $7.3$, p. $91$. The main idea of this approach is to sample the
exponential random variables recursively while avoiding rare-event settings.
This technique was especially designed to handle the numerical problems that may occur during the exponential convolution calculation.
\eit
The rest of the note is organized as follows.
In section \ref{IS.sec} we introduce our algorithm and prove that  it is unbiased and has a bounded relative error.
In section \ref{numerical.sec} we present numerical results and show that our approach can be compared with other methods.
Finally, section \ref{conclusion.sec} presents some concluding remarks.

\section{{\it IS} Algorithm}\label{IS.sec}
Given independent exponential random variables $X_1,\dots ,X_n$ such that $X_i\sim exp(\lambda_i)$, we propose
to sample from different densities and use likelihood ratios respectively. The details are presented in the following algorithm.\\

\balg \textbf{{\it IS} Algorithm}\label{alg.IS}\\
{\it Input:} $\lambda_1,\cdots,\lambda_n$\\
{\it Output:} $\widehat{\Pm}\big(\sum_{i=1}^n{X_i} \leq 1\big)$
\begin{algorithmic}[1]
\State $res \gets 0$
    \For{$i = 1 \to N$}
        \State Sample $y_1,\cdots,y_n$, such that $y_i \sim exp(n)$
        \If{$\sum_{i=1}^n{y_i} \leq 1$}
            \State $res \gets res + \frac{\prod_{i=1}^n{\lambda_i e^{-\lambda_iy_i}}}{\prod_{i=1}^n{n e^{-ny_i}}}$
         \EndIf
    \EndFor
\State return $\frac{res}{N}$
\end{algorithmic}
\ealg
Let us define
\begin{equation}\label{Prob}
    \ell=\Pm\big(\sum_{i=1}^n{X_i} \leq 1 \big).
\end{equation}
Note that the algorithm outputs an estimator to $\Em(Z)$, where
\begin{equation}\label{Def-Z}
    Z = 1_{\{\sum_{i=1}^n{Y_i}\leq 1\}}\frac{\prod_{i=1}^n{\lambda_i e^{-\lambda_i Y_i}}}{\prod_{i=1}^n{n e^{-nY_i}}}.
\end{equation}
For a formal proof that $\Em[Z]$ is an unbiased estimator of $\Pm\big(\sum_{i=1}^n{X_i} \leq 1 \big)$ see Lemma \ref{unbiased}.\\\\
The following corollary immediately follows  from the definition of a relative error and from Theorem \ref{variance.bound}.
\begin{cor}\label{re.bound}
The relative error of the {\it IS} Algorithm satisfies
\beq
RE\leq \sqrt{ \frac{\sqrt{n}e^{2(\bar\lambda-\underline{\lambda})+1}}{N}}
 \eeq
 where $n$ is a number of exponential random variables in the sum, $\bar{\lambda}=\max_{i=1,...,n}\{\lambda_i\}$,   $\underline{\lambda}=\min_{i=1,...,n}\{\lambda_i\}$, and $N$ is the sample size.
\end{cor}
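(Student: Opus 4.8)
The plan is to obtain the corollary directly from the definition of the relative error of a Monte Carlo estimator, feeding in the variance estimate of Theorem \ref{variance.bound} as the only substantive ingredient. Recall that the \emph{IS} Algorithm returns $\hat{\ell} = \frac{1}{N}\sum_{i=1}^{N} Z_i$, where $Z_1,\dots,Z_N$ are independent copies of the weight $Z$ from \refs{Def-Z}, and that the relative error of such an estimator is
\beq
RE \;=\; \frac{\sqrt{\var(\hat{\ell})}}{\Em[\hat{\ell}]}.
\eeq
Thus the task reduces to computing $\Em[\hat{\ell}]$ and bounding $\var(\hat{\ell})$.

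First I would exploit the independence of the replications together with the unbiasedness proved in Lemma \ref{unbiased}: independence gives $\var(\hat{\ell}) = \var(Z)/N$, while Lemma \ref{unbiased} gives $\Em[\hat{\ell}] = \Em[Z] = \ell$. Substituting both identities yields
\beq
RE \;=\; \frac{1}{\sqrt{N}}\,\frac{\sqrt{\var(Z)}}{\ell} \;=\; \sqrt{\frac{\var(Z)}{N\,\ell^2}}.
\eeq
The factor $1/\sqrt{N}$ is already in place, so all that remains is to control the normalized variance $\var(Z)/\ell^2$.

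The last step is to invoke Theorem \ref{variance.bound}, which (matching the form of the claim) supplies $\var(Z) \leq \ell^2\,\sqrt{n}\,e^{2(\bar\lambda - \underline{\lambda})+1}$, i.e. $\var(Z)/\ell^2 \leq \sqrt{n}\,e^{2(\bar\lambda - \underline{\lambda})+1}$; if instead the theorem is phrased as a second-moment bound, the inequality $\var(Z) \le \Em[Z^2]$ lets me use it verbatim. Plugging this into the previous display and cancelling the $\ell^2$ gives
\beq
RE \;\leq\; \sqrt{\frac{\sqrt{n}\,e^{2(\bar\lambda - \underline{\lambda})+1}}{N}},
\eeq
as claimed. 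I do not anticipate any genuine obstacle here: the analytic work is entirely contained in Theorem \ref{variance.bound}, whose bound on the likelihood ratio $Z$ is precisely what encodes the bounded-relative-error property, and once that theorem is available the corollary follows by elementary algebra and monotonicity of the square root.
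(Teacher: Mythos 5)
Your proposal is correct and is exactly the route the paper intends: the paper gives no separate proof, asserting the corollary follows immediately from the definition of relative error together with Theorem \ref{variance.bound}, which is your chain $RE = \sqrt{\var(Z)/(N\ell^2)}$ combined with $\var(Z)\le \Em(Z^2)$ and the second-moment bound $\Em(Z^2)/(\Em Z)^2 \le \sqrt{n}\,e^{2(\bar\lambda-\underline{\lambda})+1}$. You even correctly anticipated and handled the only subtlety, namely that the theorem bounds the second-moment ratio rather than the variance ratio.
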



\section{Numerical Results}\label{numerical.sec}
We conducted many numerical experiments using all the algorithms mentioned earlier. In general, we came to the conclusion that for most practical purposes, the exact algorithm {\it convolution2} should be preferred. Unfortunately, when rare event settings are involved the latter may fail. In this section we consider the performance of the proposed algorithms on $3$ models. We performed all computations on an Intel Core i5 laptop with 4GB RAM.
We use the same algorithm parameters for all models.
\begin{itemize}
    \item {\it IS}:  $N=100  n$ sample size
    \item {\it Cross Entropy}:  $\rho=0.3$, $\alpha=0.5$ and $N=100n$ sample size both for parameter estimation and final sampling
    \item {\it Splitting}:  $\rho=0.3$ and $N=1,000$ sample size
    \item {\it G-S}:  $N=100,000$ sample size
    \item The relative error ($\widehat{RE}$) calculation is based on $K=10$ independent runs.The $\widehat{RE}$ was calculated as
     \beq  \widehat{RE} = {S\over \widetilde \ell},
 \label{re}
 \eeq
where $$\widehat \ell = \widehat \Pm\big(\sum_{i=1}^n{X_i} \leq 1 \big), \ S^2 = {1\over K-1}\sum_{i=1}^K (\widehat \ell_i - \widetilde \ell)^2 \rm \ and  \
\widetilde \ell = {1\over K}\sum_{i=1}^K \widehat \ell_i.$$

\item $\widehat{RTV}$ - relative time variance is used to compare different algorithms; it is defined as the simulation time in seconds multiplied by the squared relative error.

\end{itemize}
We consider the following models.
\bit
\item \textbf{Model 1:}  $\sum_{i=1}^{10}{X_i}$ where $X_i\sim exp(\lambda)$ are i.i.d  exponential random variables with $\lambda = 0.03$.
\item \textbf{Model 2:}  $\sum_{i=1}^{10}{X_i}$ where $X_i\sim exp(\lambda)$ are i.i.d exponential  random variables with $\lambda = 0.01$.
\item \textbf{Model 3:}  $\sum_{i=1}^{10}{X_i}$ where $X_i\sim exp(\lambda_i)$. The corresponding $\lambda$ values are given below.
\[
\begin{array}{c}
  \lambda_1=0.01, \lambda_2=0.011,\lambda_3=0.009,\lambda_4=0.01, \lambda_5=0.011, \\
  \lambda_6= 0.009,  \lambda_7=0.01 \lambda_8=0.011,\lambda_9=0.009,	 \lambda_{10}=0.01.
\end{array}
\]

\eit
The following tables summarize our results.
\begin{table}[H]
\caption{Average performance of $10$ runs of the algorithms for \textbf{Model 1}}
\bc \begin{tabular}{c|c|c|c|c}
\multicolumn{1}{c}{} & \multicolumn{1}{c}{}  & \multicolumn{1}{c}{} & \multicolumn{1}{c}{} & \multicolumn{1}{c}{}\\
$Algorithm$ &  {$\widehat{\Pm}(\sum{X_i} \leq 1)$}  & $\widehat{RE}$ & $\widehat{RTV}$& CPU \\ \hline \hline
{\it IS}	&	\num{1.61E-22}	&	\num{5.98E-02}	&	\num{3.38E-04}	&	 0.094	\\	\hline
{\it Cross Entropy}	&	\num{1.58E-22}	&	\num{9.24E-02}	&	\num{1.90E-03}	 &	0.222	\\	\hline
{\it Splitting}	&	\num{1.32E-22}	&	\num{5.53E-01}	&	1.37	&	4.501	 \\	\hline
{\it G-S}	&	\num{1.44E-22}	&	\num{3.94E-01}	&	\num{2.85E-02}	&	 0.184	\\	\hline
\end{tabular}
\ec \label{tab.11}
\end{table}
The exact {\it convolution2} algorithm  delivers $\Pm(\sum{X_i} \leq 1)=\num{1.1102e-016}$ as an output. Unfortunately, Algorithm {\it convolution1} cannot be used for equal $\lambda$ values.\\\\
\begin{table}[H]
\caption{Average performance of $10$ runs of the algorithms for \textbf{Model 2}}
\bc \begin{tabular}{c|c|c|c|c}
\multicolumn{1}{c}{} & \multicolumn{1}{c}{}  & \multicolumn{1}{c}{} & \multicolumn{1}{c}{} & \multicolumn{1}{c}{}\\
$Algorithm$ &  {$\widehat{\Pm}(\sum{X_i} \leq 1)$}  & $\widehat{RE}$ & $\widehat{RTV}$& CPU \\ \hline \hline
{\it IS}	&	\num{2.75E-27}	&	\num{6.05E-02}	&	\num{3.50E-04}	&	 0.096	\\	\hline
{\it Cross Entropy}	&	\num{2.74E-27}	&	\num{6.73E-02}	&	\num{5.82E-04}	 &	0.128	\\	\hline
{\it Splitting}	&	\num{2.97E-27}	&	\num{5.55E-01}	&	1.74	&	5.667	 \\	\hline
{\it G-S}	&	\num{2.35E-27}	&	\num{3.77E-01}	&	\num{2.57E-02}	&	 0.181	\\	\hline
\end{tabular}
\ec \label{tab.12}
\end{table}
The exact {\it convolution2} algorithm delivers $\Pm(\sum{X_i} \leq 1)=\num{1.1102e-016}$ as an output. Note that the algorithm outputs the same value for both \textbf{Model 1} and \textbf{Model 2}.\\\\
\begin{table}[H]
\caption{Average performance of $10$ runs of the algorithms for \textbf{Model 3}}
\bc \begin{tabular}{c|c|c|c|c}
\multicolumn{1}{c}{} & \multicolumn{1}{c}{}  & \multicolumn{1}{c}{} & \multicolumn{1}{c}{} & \multicolumn{1}{c}{}\\
$Algorithm$ &  {$\widehat{\Pm}(\sum{X_i} \leq 1)$}  & $\widehat{RE}$ & $\widehat{RTV}$& CPU \\ \hline \hline
{\it IS}	&	\num{2.56E-27}	&	\num{3.19E-02}	&	\num{9.75E-05}	&	 0.096	\\	\hline
{\it Cross Entropy}	&	\num{2.60E-27}	&	\num{3.42E-02}	&	\num{1.61E-04}	 &	0.138	\\	\hline
{\it Splitting}	&	\num{2.35E-27}	&	\num{4.18E-01}	&	\num{9.75E-01}	&	 5.588	\\	\hline
{\it G-S}	&	\num{2.14E-27}	&	\num{2.47E-01}	&	\num{1.10E-02}	&	 0.180	\\	\hline
\end{tabular}
\ec \label{tab.13}
\end{table}
Algorithm {\it convolution1} cannot deliver a meaningful answer and  {\it convolution2} algorithm delivers $\Pm(\sum{X_i} \leq 1)=\num{-2.2204e-016}$ as an output. Note that in this case the stability is lost and the algorithm outputs  $\Pm(\sum{X_i} \geq 1)>1$.
\section{Conclusions}\label{conclusion.sec}
In this note, we developed a new importance sampling algorithm for computing the CDF of the {\it Hypoexponential} random variable. We proved that the proposed estimator is efficient and its performance compares favorably with other existing methods. Based on our numerical results we conclude that
in situations with no rare events involved, one should prefer to use the exact {\it convolution2} method that is still relatively fast and very stable. Naturally, when the exact method fails, which may happen in case of very small probabilities, one should apply some Monte Carlo approximation. MCMC based $Splitting$ is too slow to be used in reliability applications. The {\it G-S}  has a good performance and very easy to implement but it seems that its relative error is inferior when compared to {\it IS}. The {\it Cross Entropy} and the proposed {\it IS} algorithm are comparable, but {\it IS} is much simpler to implement.\\\\\\\\
\textbf{ACKNOWLEDGMENT}\\\\
We are thoroughly grateful to anonymous reviewers for their valuable constructive  remarks and suggestions.
\newpage
\bibliographystyle{plain}
\bibliography {convolution-estimation-bib}

\newpage

\begin{appendices}

\section{Proofs}\label{proofs}

\begin{lemma}\label{unbiased}
The output of {\it IS} Algorithm \ref{alg.IS} is unbiased.
\end{lemma}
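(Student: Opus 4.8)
The plan is to recognize the estimator produced by Algorithm \ref{alg.IS} as a standard importance-sampling estimator and to verify the underlying change-of-measure identity directly. First I would record the two relevant joint densities on $(0,\infty)^n$: the target density $f(\vect{y}) = \prod_{i=1}^n \lambda_i e^{-\lambda_i y_i}$, associated with the independent variables $X_i \sim exp(\lambda_i)$, and the proposal density $g(\vect{y}) = \prod_{i=1}^n n e^{-n y_i}$, from which the algorithm draws the samples $Y_1,\dots,Y_n$ in line $3$. Since $g(\vect{y}) > 0$ everywhere on $(0,\infty)^n$, the likelihood ratio $f(\vect{y})/g(\vect{y})$ appearing in line $5$ is well defined wherever the $Y_i$ are generated, so no dominating-support issue arises.

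Next I would compute $\Em[Z]$ for the single-replication random variable $Z$ defined in \eqref{Def-Z}. Writing the expectation as an integral against the proposal density $g$,
\begin{equation}
\Em[Z] = \int_{(0,\infty)^n} 1_{\{\sum_{i=1}^n y_i \leq 1\}}\, \frac{f(\vect{y})}{g(\vect{y})}\, g(\vect{y})\, d\vect{y},
\end{equation}
the factor $g$ cancels, leaving $\int_{(0,\infty)^n} 1_{\{\sum_{i=1}^n y_i \leq 1\}}\, f(\vect{y})\, d\vect{y}$. Because $f$ is precisely the joint density of $(X_1,\dots,X_n)$, this last integral equals $\Pm\big(\sum_{i=1}^n X_i \leq 1\big)$, that is, $\Em[Z] = \ell$ as defined in \eqref{Prob}.

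Finally, the quantity returned by the algorithm is the sample mean $\frac{1}{N}\sum_{j=1}^N Z_j$ of $N$ independent copies of $Z$, so by linearity of expectation its expected value is again $\Em[Z] = \ell$, which is exactly the unbiasedness claim. I expect no genuine obstacle in this argument: the only step requiring care is the absolute-continuity check, namely that the proposal density dominates the target on the integration region, and this is immediate since both densities are strictly positive on $(0,\infty)^n$. Everything else reduces to the routine importance-sampling cancellation carried out above.
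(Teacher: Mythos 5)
Your argument is correct and follows essentially the same route as the paper's own proof: both compute $\Em[Z]$ as an integral against the proposal density $g(\vect{y})=\prod_{i=1}^n n e^{-ny_i}$, cancel $g$ against the likelihood ratio, and identify the remaining integral of $\prod_{i=1}^n \lambda_i e^{-\lambda_i y_i}$ over $\{\sum_i y_i \leq 1\}$ with $\Pm\big(\sum_{i=1}^n X_i \leq 1\big)$. Your added remarks on support domination and the sample-mean step are harmless refinements of the same computation.
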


\begin{proof}
Let   $Y_i \sim \exp(n)$ , $i=1,2,...,n$ be independent exponentially distributed random variables. Recall that we are looking for an unbiased estimator
of $P(X_1+X_2+...+X_n \leq 1)$, where $X_i$ are independent and $X_i \sim \exp(\lambda_i)$.  Our estimator is

\begin{equation}
Z=1_{\{\sum_{i=1}^n Y_i \leq 1\}}\frac{\prod_{i=1}^n \lambda_i e^{-\lambda_i Y_i}}{\prod_{i=1}^n n e^{-n Y_i}}.
\end{equation}
Note that the joint density function of $\textbf{Y}=(Y_1,Y_2,...,Y_n)$ is $\Psi(\textbf{v})=\prod_{i=1}^n n e^{-n v_i}$.
Now
\[
E[Z]=\int  (n) \int_{v_i \geq0, v_1+...+v_n \leq 1}\frac{\prod_{i=1}^n \lambda_i e^{-\lambda_i v_i}}{\prod_{i=1}^n n e^{-n v_i}}\prod_{i=1}^n n e^{-n v_i}dv_1dv_2...dv_n=
\]
\[
\int  (n) \int_{v_i \geq0, v_1+...+v_n \leq 1}\prod_{i=1}^n \lambda_i e^{-\lambda_i v_i}dv_1dv_2...dv_n=P(X_1+X_2+...+X_n \leq 1). \,\,Q.E.D.
\]
\end{proof}


\begin{theorem}\label{variance.bound}
Let $Z$ be defined as in (\ref{Def-Z}). Then we have,
\begin{equation}\label{Eff-Bound}
    \frac{\mathbb{E}(Z^2)}{(\mathbb{E}(Z))^2}\leq \sqrt{n}e^{2(\bar\lambda-\underline{\lambda})+1},
\end{equation}
where $\bar{\lambda}=\max_{i=1,...,n}\{\lambda_i\}$ and  $\underline{\lambda}=\min_{i=1,...,n}\{\lambda_i\}$.
\end{theorem}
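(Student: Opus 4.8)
The plan is to reduce everything to integrals over the standard simplex $\Delta_n=\{v\in\mathbb{R}^n : v_i\ge 0,\ \sum_{i=1}^n v_i\le 1\}$ and then bound the second moment against the squared mean. First I would make the likelihood ratio explicit: since $Y_i\sim\exp(n)$ has joint density $\prod_{i=1}^n n e^{-nv_i}$, squaring $Z$ and integrating against this density collapses the ratio to
\begin{equation}
\mathbb{E}(Z^2)=\frac{\prod_{i=1}^n\lambda_i^2}{n^n}\int_{\Delta_n} e^{\sum_{i=1}^n (n-2\lambda_i)v_i}\,dv,
\end{equation}
while Lemma \ref{unbiased} gives $\mathbb{E}(Z)=\ell=\int_{\Delta_n}\prod_{i=1}^n\lambda_i e^{-\lambda_i v_i}\,dv$.

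Next I would bound the two quantities separately on $\Delta_n$, using the crucial fact that $\mathrm{Vol}(\Delta_n)=1/n!$. For the numerator, on $\Delta_n$ one has $\sum_{i=1}^n(n-2\lambda_i)v_i\le (n-2\underline{\lambda})\sum_{i=1}^n v_i\le n-2\underline{\lambda}$, whence
\begin{equation}
\mathbb{E}(Z^2)\le \frac{\prod_{i=1}^n\lambda_i^2}{n^n}\cdot\frac{e^{\,n-2\underline{\lambda}}}{n!}.
\end{equation}
For the denominator I would bound $e^{-\lambda_i v_i}\ge e^{-\bar\lambda v_i}$ and $e^{-\bar\lambda\sum v_i}\ge e^{-\bar\lambda}$ on $\Delta_n$, giving $\ell\ge \prod_{i=1}^n\lambda_i\cdot e^{-\bar\lambda}/n!$ and hence $\ell^2\ge (\prod_{i=1}^n\lambda_i^2)\,e^{-2\bar\lambda}/(n!)^2$. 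Dividing, the products of $\lambda_i^2$ cancel and one factor of $n!$ survives:
\begin{equation}
\frac{\mathbb{E}(Z^2)}{\ell^2}\le \frac{n!\,e^{n}}{n^n}\,e^{2(\bar\lambda-\underline{\lambda})}.
\end{equation}

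The final step is Stirling's inequality $n!\le e\sqrt{n}\,(n/e)^n$, equivalently $n!\,e^n/n^n\le e\sqrt{n}$, which turns the right-hand side into $\sqrt{n}\,e^{2(\bar\lambda-\underline{\lambda})+1}$, exactly the claimed bound. The real content --- and the step I would treat most carefully --- is this last move: it is the simplex volume $1/n!$ pairing with the $e^n$ produced by the exponential estimates that downgrades what superficially looks like exponential-in-$n$ growth to the mild $\sqrt{n}$, while the residual factor $e^{2(\bar\lambda-\underline{\lambda})}$ is precisely the price of estimating the numerator with $\underline{\lambda}$ but lower-bounding $\ell$ with $\bar\lambda$. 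The only hypothesis needed to keep the numerator estimate valid is $n\ge 2\underline{\lambda}$, so that the maximum of the linear exponent over $\Delta_n$ is attained at $\sum v_i=1$; this holds comfortably in the rare-event regime of interest, and outside it the exponent is negative so the bound only improves.
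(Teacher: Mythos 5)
Your proof is correct, and it takes a genuinely more elementary route than the paper's. The paper first collapses the problem to one dimension: it observes that $Y=\sum_{i=1}^n Y_i$ is $\mathrm{Erlang}(n,n)$, bounds $\mathbb{E}(Z^2)\leq \big(\prod_i\lambda_i/n^n\big)^2\,\mathbb{E}\big(1_{\{Y\leq 1\}}e^{2(n-\underline{\lambda})Y}\big)$ and $\mathbb{E}(Z)\geq \big(\prod_i\lambda_i/n^n\big)\,\mathbb{E}\big(1_{\{Y\leq 1\}}e^{(n-\bar{\lambda})Y}\big)$, and then expresses the resulting one-dimensional integrals through $I(n,x)=\int_0^x t^{n-1}e^{t}\,dt$ and the lower incomplete gamma function $\gamma(n,x)$, which it bounds by $I(n,x)\leq x^ne^{x}/n$ and $\gamma(n,x)\geq x^n e^{-x}/n$. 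You instead stay in $n$ dimensions and replace all of that machinery by the single fact $\mathrm{Vol}(\Delta_n)=1/n!$, bounding the exponential integrand by its extreme values on the simplex. The two arguments are quantitatively identical --- your intermediate bound $\frac{n!\,e^n}{n^n}e^{2(\bar{\lambda}-\underline{\lambda})}$ is exactly what the paper's chain produces after cancellation, and both finish with the same Stirling inequality $n!\leq n^{n+1/2}e^{-n+1}$ --- but your version avoids the Erlang density and the incomplete gamma functions entirely, and makes the mechanism (simplex volume $1/n!$ absorbing the $e^n$) transparent. One small correction: your closing claim that for $n<2\underline{\lambda}$ ``the bound only improves'' is not right as stated. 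In that regime the step $\sum_i(n-2\lambda_i)v_i\leq n-2\underline{\lambda}$ fails (multiplying $\sum_i v_i\leq 1$ by the negative number $n-2\underline{\lambda}$ reverses the inequality), and the natural fallback of bounding the exponent by $0$ yields $\sqrt{n}\,e^{2\bar{\lambda}-n+1}$, which \emph{exceeds} the claimed bound precisely when $n<2\underline{\lambda}$. You are in good company, though: the paper's own substitution $\int_0^1 x^{n-1}e^{(n-2\underline{\lambda})x}dx=(n-2\underline{\lambda})^{-n}I(n,n-2\underline{\lambda})$ tacitly requires $n>2\underline{\lambda}$ as well (and its stated bound on $I(n,x)$ for all $x\geq 0$ is likewise only valid for $x\leq n$), so both proofs carry the same implicit hypothesis, which is harmless in the rare-event regime the theorem is designed for.
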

\begin{proof}
Denote by
\begin{equation}\label{Y-not}
    Y:=\sum_{i=1}^nY_i.
\end{equation}
By the definition of the random variables $y_i, \ i=1,...,n$, we have that $Y$ is distributed $\textrm{Erlang}(n,n)$ and therefore it has the following probability density
\begin{equation}\label{Density}
    f_{Y}(y)=\frac{n^{n}}{(n-1)!}y^{n-1}e^{-ny}, \ \ y> 0.
\end{equation}
Define
\begin{equation}\label{I}
    I(n,x):=\int_{0}^{x}t^{n-1}e^{t}dt.
\end{equation}
From (\ref{I}),(\ref{Def-Z}),(\ref{Y-not}) and (\ref{Density}) we have,
\begin{eqnarray} \label{upper-bound}
  \mathbb{E}(Z^2) &= &  \bigg(\frac{\prod_{i=1}^n{\lambda_i}}{\prod_{i=1}^nn}\bigg)^2\mathbb{E}\bigg(1_{\{\sum_{i=1}^n{Y_i}\leq 1\}}\frac{\prod_{i=1}^n{e^{-2\lambda_iY_i}}}{\prod_{i=1}^n{ e^{-2nY_i}}}\bigg)   \\
  &\leq &  \bigg(\frac{\prod_{i=1}^n{\lambda_i}}{\prod_{i=1}^nn}\bigg)^2\mathbb{E}\bigg(1_{\{\sum_{i=1}^n{Y_i}\leq 1\}}\frac{{e^{-2\underline{\lambda}\sum_{i=1}^nY_i}}}{{ e^{-2n\sum_{i=1}^n Y_i}}}\bigg) \nonumber  \\
  &= &  \bigg(\frac{\prod_{i=1}^n{\lambda_i}}{\prod_{i=1}^nn}\bigg)^2 \mathbb{E}\bigg(1_{\{Y\leq 1\}}e^{2(n-\underline{\lambda})Y}\bigg) \nonumber  \\
  &=&\bigg(\frac{\prod_{i=1}^n{\lambda_i}}{\prod_{i=1}^nn}\bigg)^2\frac{n^{n}}{(n-1)!}\int_{0}^{1}x^{n-1}e^{(n-2\underline{\lambda})x}dx\nonumber  \\
  &=&\bigg(\frac{\prod_{i=1}^n{\lambda_i}}{\prod_{i=1}^nn}\bigg)^2\frac{n^n}{(n-1)!}
\frac{1}{(n-2\underline{\lambda})^{n}}\int_{0}^{n-2\underline{\lambda}}x^{n-1}e^{x}dx \nonumber \\
&=&\bigg(\frac{\prod_{i=1}^n{\lambda_i}}{\prod_{i=1}^nn}\bigg)^2\frac{n^n}{(n-1)!}
\frac{1}{(n-2\underline{\lambda})^{n}}I(n,n-2\underline{\lambda}). \nonumber
\end{eqnarray}
Recall the definition of the \emph{lower incomplete gamma function},
\begin{equation}\label{L-I-g}
    \gamma(n,x):=\int_{0}^{x}t^{n-1}e^{-t}dt.
\end{equation}
Use (\ref{L-I-g}),(\ref{Def-Z}),(\ref{Y-not}) and (\ref{Density}) to get,
\begin{eqnarray} \label{Lower-bound}
\mathbb{E}(Z) &=& \frac{\prod_{i=1}^n\lambda_i}{\prod_{i=1}^n n}\mathbb{E}\bigg(1_{\{\sum_{i=1}^n{Y_i}\leq 1\}}\frac{{e^{-\sum_{i=1}^n{\lambda_i}Y_i}}}{{ e^{-n\sum_{i=1}^n{Y_i}}}}\bigg)   \\
&\geq&  \frac{\prod_{i=1}^n\lambda_i}{\prod_{i=1}^n n}\mathbb{E}\bigg(1_{\{\sum_{i=1}^n{Y_i}\leq 1\}}\frac{{e^{-\bar{\lambda}\sum_{i=1}^nY_i}}}{{ e^{-n\sum_{i=1}^n{Y_i}}}}\bigg) \nonumber  \\
&=&  \frac{\prod_{i=1}^n\lambda_i}{\prod_{i=1}^n n}\mathbb{E}\bigg(1_{\{Y\leq 1\}}{e^{(n-\bar{\lambda})Y}}\bigg) \nonumber  \\
&=&  \frac{\prod_{i=1}^n\lambda_i}{\prod_{i=1}^n n}\cdot\frac{n^{n}}{(n-1)!}\int_{0}^{1}x^{n-1}e^{-\bar{\lambda}x}dx \nonumber  \\
&=&  \frac{\prod_{i=1}^n\lambda_i}{\prod_{i=1}^n n}\cdot\frac{n^{n}}{(n-1)!}\frac{1}{\bar\lambda^n}\int_{0}^{\bar{\lambda}}x^{n-1}e^{-x}dx \nonumber  \\
&=&  \frac{\prod_{i=1}^n\lambda_i}{\prod_{i=1}^n n}\cdot\frac{n^{n}}{(n-1)!}\frac{1}{\bar\lambda^n}\gamma(n,\bar{\lambda}). \nonumber
\end{eqnarray}
From (\ref{upper-bound}) and (\ref{Lower-bound}) we get
\begin{eqnarray} \label{eff-1}
\frac{\mathbb{E}(Z^2)}{(\mathbb{E}(Z))^2}\ &\leq & \frac{(n-1)!}{n^{n}}
\frac{\bar{\lambda}^{2n}}{(n-2\underline{\lambda})^n}\frac{I(n,n-2\underline{\lambda})}{(\gamma(n,\bar{\lambda}))^2}.
\end{eqnarray}
By a simple calculation we obtain the following bounds on the functions $I$ and $\gamma$,
\begin{equation}\label{I-Bound}
    I(n,x)\leq \frac{x^{n}e^{n}}{n}, \ \forall x \in [0,\infty), \ n\in \mathbb{N},
\end{equation}
\begin{eqnarray} \label{Gamma-Bound}
\gamma(n,x)\geq \frac{x^n}{ne^{x}}, \ \forall x \in [0,\infty), \ n\in \mathbb{N}.
\end{eqnarray}
Recall Stirling's formula
\begin{eqnarray} \label{stir}
n! \leq n^{n+1/2}e^{-n+1}, \ \forall \ n\in \mathbb{N}.
\end{eqnarray}
Apply (\ref{I-Bound})--(\ref{stir}) on (\ref{eff-1}) to get
\begin{eqnarray} \label{eff-2}
\frac{\mathbb{E}(Z^2)}{(\mathbb{E}(Z))^2}\ &\leq & \frac{e^{-n+1}}{\sqrt{n}}
\frac{\bar{\lambda}^{2n}}{(n-2\underline{\lambda})^n}\frac{I(n,n-2\underline{\lambda})}{(\gamma(n,\bar{\lambda}))^2} \nonumber \\
&\leq&\frac{e^{-n+1}}{\sqrt{n}}
\frac{\bar{\lambda}^{2n}}{(n-2\underline{\lambda})^n}
\frac{\frac{1}{n}(n-2\underline{\lambda})^{n}e^{n-2\underline{\lambda}}}{\frac{\bar{\lambda}^{2n}}{n^2e^{2\bar\lambda}}} \nonumber \\
&=&\sqrt{n}e^{2(\bar\lambda-\underline{\lambda})+1},
\end{eqnarray}
and we get (\ref{Eff-Bound}).
\end{proof}

\section{MatLab code for exact computation}\label{exact.algorithm}
\begin{lstlisting}
    function ell=convolution1(t,nu)
        % computes P(A_1+...+A_b>t) exactly,
        % where A_i distributed Exp(nu(i)) independently;
        % nu has to be decreasing (sorted) sequence
        b=length(nu); % parameters of the waiting times
        w=zeros(b,b); % b is critical number
        w(l,l)=l;
        for k=l:b-l
            for j=l:k
                w(k+l,j)=w(k,j)*nu(b-k)/(nu(b-k)-nu(b-j+l));
                w(k+l,k+l)=l-sum(w(k+l,1 :k));
            end
        end
        ell=w(b,:)*exp(-nu(end:-l: 1)’*t); % probability
    end
\end{lstlisting}

\begin{lstlisting}
    function ell=convolution2(t,nu)
       % computes P(A_1+...+A_b>t) exactly,
       % where A_i ~ Exp(nu(i)) independently;
       % nu has to be decreasing (sorted) sequence
       b=length(nu); % parameters of the waiting times
       A=diag(-nu)+diag(nu(1:b-1),1);
       A=expm(A*t);
       ell=sum(A(1,:));
    end
\end{lstlisting}

\end{appendices}

\end{document}